\documentclass[11pt]{article}
\usepackage{multibib}
\usepackage{float}
\usepackage[export]{adjustbox}
\usepackage{graphicx}
\graphicspath{ {./images/} }
\usepackage{subcaption}
\usepackage[utf8]{inputenc}
\usepackage[export]{adjustbox}
\usepackage{wrapfig}
\usepackage{authblk}
\usepackage[top=27mm, bottom=27mm, left=22mm, right=22mm]{geometry}
\usepackage{amsthm,amsmath,amssymb}
\usepackage{mathtools}
\usepackage{mathrsfs}
\usepackage{multirow}
\usepackage{microtype}
\usepackage{hyperref}
\usepackage[capitalize]{cleveref}
\usepackage{xcolor}
\usepackage{verbatim}
\usepackage{indentfirst}
\usepackage[noadjust]{cite}
\usepackage{ulem}

\usepackage{pgfplots}
\usepgfplotslibrary{groupplots}
\pgfplotsset{compat=1.18}
\usepackage{tikz}
\usepackage{float}

\newtheorem{theorem}{Theorem}[section]
\newtheorem{lemma}[theorem]{Lemma}

\newtheorem{corollary}[theorem]{Corollary}

\newtheorem{remark}[theorem]{Remark}

\newcommand{\supp}{\operatorname{supp}}

\title{Improved Johnson-type Bounds for Insertion-Deletion Codes}

\author{
Yulin Yang\thanks{Research Center for Mathematics and Interdisciplinary Sciences, Shandong University, Qingdao 266237, China. Email:~forestyoung@mail.sdu.edu.cn.} 
}

\date{}

\begin{document}

\maketitle

\begin{abstract}
We improve upon the Johnson-type bounds of Hayashi--Yasunaga and Liu--Tjuawinata--Xing for insertion--deletion codes by encoding each local list into a binary constant-weight code. 
The resulting local list-size bound is tight over sufficiently large alphabets. 
Combining this bound with an averaging argument and the constant-weight McEliece--Rodemich--Rumsey--Welch bound yields an asymptotic rate bound that strictly improves Yasunaga's Elias-type bound throughout the nontrivial range.
\end{abstract}

\noindent\textbf{Keywords:} Insertion-deletion codes, Levenshtein distance, Johnson bound, constant-weight codes

\noindent\textbf{MSC2020:} 94B65

\section{Introduction}

\noindent Codes designed to correct insertion and deletion errors are known as insertion--deletion (insdel) codes. 
The study of insdel codes is motivated by applications such as DNA-based data storage~\cite{cai2021correcting}, file synchronization~\cite{hanna2019guess}, and racetrack memories~\cite{chee2018coding}.
To quantify the error-correcting capability of such codes, Levenshtein~\cite{levenshtein1966binary} introduced a metric now known as the Levenshtein distance. 
For two words $x,y\in[q]^n$, their Levenshtein distance $d_{\mathrm L}(x,y)$ is defined as the minimum total number of insertions and deletions required to transform $x$ into $y$.
Consequently, a code with minimum Levenshtein distance at least $d$ can uniquely correct any combination of at most $\lfloor(d-1)/2\rfloor$ insertion and deletion errors.
A central problem in the study of insdel codes is to determine the maximum cardinality of a $q$-ary code of block length $n$ with minimum Levenshtein distance at least $d$. 
We denote this quantity by $A_q^{(\mathrm L)}(n,d)$. 

List decoding is a relaxation of unique decoding that aims to decode beyond the unique-decoding radius $\lfloor(d-1)/2\rfloor$. 
Under list decoding, there may be several codewords within the decoding radius of a received word, 
but the number of such codewords, called the list size, is required to remain small, typically polynomial in the block length $n$.

List decoding of insdel codes has been investigated in several recent works~\cite{guruswami2017deletion,liu2021efficient,guruswami2021optimally}.
Unlike in the unique-decoding setting, insertions and deletions affect list-decodability asymmetrically~\cite{haeupler2018synchronization}. 
This asymmetry requires the insertion and deletion radii to be treated separately. 
Informally, a code $C\subseteq[q]^n$ is said to be $(\tau_{\mathrm I},\tau_{\mathrm D},L)$-insdel-list-decodable if 
it can be list-decoded from up to $\tau_{\mathrm I}n$ insertions and $\tau_{\mathrm D}n$ deletions with a list of size at most $L$.

To formalize this notion, for a word $z\in[q]^N$ and nonnegative integers $s,t$, 
let $B_{\mathrm L}(z,s,t)$ denote the set of words that can be obtained from $z$ by at most $s$ insertions and at most $t$ deletions. 
For nonnegative real numbers $\tau_{\mathrm I},\tau_{\mathrm D}$ and a positive integer $L$, 
a code $C\subseteq[q]^n$ is said to be $(\tau_{\mathrm I},\tau_{\mathrm D},L)$-insdel-list-decodable if, 
for every integer $N\in[n-\tau_{\mathrm D}n, n+\tau_{\mathrm I}n]$ and every word $z\in[q]^N$, 
we have $|B_{\mathrm L}(z,\tau_{\mathrm D}n,\tau_{\mathrm I}n)\cap C|\leq L$.
Note that the two radii $\tau_{\mathrm I}$ and $\tau_{\mathrm D}$ appear in reverse order in $B_{\mathrm L}(z,\tau_{\mathrm D}n,\tau_{\mathrm I}n)$ 
because passing from the received word $z$ back to a candidate codeword interchanges the roles of insertions and deletions.

The Johnson bound serves as a benchmark in the study of list-decodability of codes. 
For a fixed alphabet, it gives a lower bound on the list-decoding radius that depends only on the relative minimum distance of the code.

\subsection{Prior work}

\paragraph{Johnson-type bounds.} The study of Johnson-type bounds for insdel codes was initiated by Wachter-Zeh~\cite{wachter2018list}. 
Hayashi and Yasunaga~\cite{hayashi2020list} subsequently made several amendments to this result and derived the following bound.

\begin{theorem}[{\cite[Theorem~1]{hayashi2020list}}]\label{thm:HY}
Let $q\geq2$ be an integer, and let $n,d$ be positive integers with $d\leq2n$. 
Let $C\subseteq [q]^n$ be a code with minimum Levenshtein distance at least $d$. 
Let $s,t$ be nonnegative integers with $s\le n$. 
If $d>2s+\frac{2t(n-s)}{n-s+t}$, then for every $N\in[n-s,n+t]$ and every $z\in [q]^{N}$, we have
\begin{equation}\label{eq:HY}
    |B_{\mathrm{L}}(z,s,t)\cap C|\leq\frac{(n+t)d}{(n-s+t)(d-2s)-2t(n-s)}.
\end{equation}
\end{theorem}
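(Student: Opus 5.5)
We propose to follow the classical Johnson-bound template: encode the list as a family of subsets of the positions of the received word $z$, bound pairwise intersections using the minimum Levenshtein distance, and finish with a second-moment (Cauchy--Schwarz) count.

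\textbf{Encoding.} Fix $N\in[n-s,n+t]$ and $z\in[q]^N$, and let $c_1,\dots,c_L$ be the codewords in $B_{\mathrm L}(z,s,t)$. Each $c_i$ is obtained from $z$ by deleting at most $t$ symbols and inserting at most $s$ symbols. Hence there is a set $S_i\subseteq[N]$ of positions of $z$ with the following property: the subsequence $z|_{S_i}$ is also a subsequence of $c_i$. Moreover,
\[
|S_i|\ \ge\ w:=\max\{\,n-s,\;N-t\,\}.
\]
If $|S_i|$ is larger than $w$, we shrink $S_i$ so that $|S_i|=w$; this is harmless, since a subsequence of a common subsequence is still a common subsequence.

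\textbf{Pairwise intersections.} For $i\ne j$, the word $z|_{S_i\cap S_j}$ is a common subsequence of $c_i$ and $c_j$. Using $d_{\mathrm L}(c_i,c_j)=2n-2\,\mathrm{LCS}(c_i,c_j)\ge d$, we get
\[
|S_i\cap S_j|\ \le\ \mathrm{LCS}(c_i,c_j)\ \le\ n-\frac d2 .
\]
So the indicator vectors of the $S_i$ form a constant-weight family in $\{0,1\}^N$ of weight $w$ with pairwise inner products at most $n-d/2$.

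\textbf{Counting.} For each position $p\in[N]$ let $m_p$ be the number of sets $S_i$ containing $p$. Then:
\begin{itemize}
\item $\sum_p m_p=Lw$;
\item $\sum_p m_p(m_p-1)\le L(L-1)(n-d/2)$;
\item $\sum_p m_p^2\ge (Lw)^2/N$ by Cauchy--Schwarz.
\end{itemize}
Combining these gives
\[
L\Bigl(\tfrac{w^2}{N}-n+\tfrac d2\Bigr)\ \le\ w-n+\tfrac d2 .
\]
We then take the worst case over $N\in[n-s,n+t]$.

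\textbf{Expected obstacle.} The main difficulty is that this crude version, with $w=n-s$ and $N=n+t$, does not reproduce \eqref{eq:HY}. Its denominator falls short of the stated one by exactly $s(2n-d)$. So the counting must exploit more structure. The first thing I would try is the weighted/centered Johnson argument: apply Cauchy--Schwarz to the vectors $\mathbf 1_{S_i}-\alpha\mathbf 1$ for an optimally chosen $\alpha$, rather than to raw multiplicities, and use both constraints $|S_i|\ge n-s$ and $|S_i|\ge N-t$ simultaneously.

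If that does not suffice, I would instead embed each $c_i$ into a common supersequence of length $n+t$ that is aligned with $z$. The idea is to record the at most $s$ inserted symbols as extra "uncovered" coordinates, so that the universe size and the weights enter as $(n-s+t)$ and $d-2s$. After solving the resulting quadratic, this should produce the positivity condition $d>2s+\frac{2t(n-s)}{n-s+t}$ and the bound \eqref{eq:HY}.

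Finally, I would verify that the resulting expression is monotone in $N$, so that the endpoint $N=n+t$ (or $N=n-s$) is indeed the worst case.
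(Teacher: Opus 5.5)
Your encoding, the bound $|S_i\cap S_j|\le n-d/2$, and the inequality $L\bigl(\frac{w^2}{N}-n+\frac d2\bigr)\le w-n+\frac d2$ are correct, and they already suffice. The gap is that you misjudge them as insufficient and then never complete an argument.

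In the ``Expected obstacle'' paragraph you plug in $(w,N)=(n-s,n+t)$. By your own definition $w=\max\{n-s,N-t\}$, so $N=n+t$ forces $w=n$. The deficit of $s(2n-d)$ is an artifact of this inconsistent pairing.

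The fallbacks you then propose are unnecessary, and as sketched they are not yet arguments. Centering at $\alpha\mathbf 1$ with the optimal $\alpha=w/N$ reproduces exactly the same inequality. The supersequence embedding is never specified.

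Your planned last step would also fail, because the worst case is not an endpoint:
\begin{itemize}
\item at $N=n-s$ your inequality gives $L\le 1$;
\item for $N\in[n-s,n-s+t]$, where $w=n-s$, the bound increases with $N$;
\item for $N\in[n-s+t,n+t]$, where $w=N-t$, it is the same expression with $s$ replaced by $n+t-N\le s$, which is no larger because that expression is nondecreasing in $s$.
\end{itemize}
So the maximum sits at the interior point $N_0=n-s+t$.

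The missing idea is to tie the weight to the length. The simplest way is to first reduce to $N_0=n-s+t$ as in Lemma~\ref{lem:length-reduction}: pad $z$ by insertions if $N<N_0$, or trim it by deletions if $N>N_0$; this can only enlarge the list. At $N_0$ you have $w=n-s$. Since $2(n-s)^2-(n-s+t)(2n-d)=(n-s+t)(d-2s)-2t(n-s)$, your inequality becomes
\[
L\le\frac{(n-s+t)(d-2s)}{(n-s+t)(d-2s)-2t(n-s)}.
\]
This is the Johnson bound \eqref{eq:Johnson-bound} for constant-weight codes of length $n-s+t$, weight $n-s$ and distance $d-2s$. The hypothesis $d>2s+\frac{2t(n-s)}{n-s+t}$ is exactly positivity of this denominator. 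Finally, $(n+t)d-(n-s+t)(d-2s)=s(2n-2s+2t+d)\ge 0$ because $s\le n$, which gives \eqref{eq:HY}.

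With this fix your argument is the paper's route: Lemmas~\ref{lem:length-reduction} and~\ref{lem:canonical-length}, followed by the constant-weight Johnson bound, whose proof is precisely your Cauchy--Schwarz count.
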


Their proof adapts the classical double-counting argument for the Johnson bound in the Hamming metric. 
For a fixed received word $z$, they consider the decoding list $\mathcal L=B_{\mathrm L}(z,s,t)\cap C$ and double-count the sum of pairwise Levenshtein distances within this list, 
$\sum_{c,c'\in\mathcal L}d_{\mathrm L}(c,c')$.

Later, Liu, Tjuawinata, and Xing~\cite{liu2023lower} derived the following lower bound on the insdel list-decoding radius, 
which improves upon the Hayashi--Yasunaga bound in certain parameter regimes.

\begin{theorem}[{\cite[Theorem~1.1]{liu2023lower}}]\label{thm:LTX}
Let $q\geq2$ be an integer, and let $n,d$ be positive integers with $d\leq2n$.
Let $C\subseteq[q]^n$ be a code with minimum Levenshtein distance at least $d=2\delta n$, and let $L\geq2$ be an integer. 
Let $\tau_{\mathrm I}$ and $\tau_{\mathrm D}$ be nonnegative real numbers satisfying $\tau_{\mathrm I}<q-1$ and $\tau_{\mathrm D}<\frac{q-1}{q}$. 
If $\tau_{\mathrm D}<\delta$ and
\begin{equation}\label{eq:LTX}
\tau_{\mathrm I}<\max_{r\in[L]}\left\{\frac{2L-r+1}{L+1}(1-\tau_{\mathrm D})-\frac{L}{r}(1-\delta)\right\},
\end{equation}
then $C$ is $(\tau_{\mathrm I},\tau_{\mathrm D},L)$-insdel-list-decodable.
\end{theorem}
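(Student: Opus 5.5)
Fix $N\in[n-\tau_{\mathrm D}n,n+\tau_{\mathrm I}n]$, a received word $z\in[q]^N$, and suppose for contradiction that $L+1$ distinct codewords $c_1,\dots,c_{L+1}\in C$ lie in $B_{\mathrm L}(z,\tau_{\mathrm D}n,\tau_{\mathrm I}n)$. Each $c_i$ is obtained from $z$ by at most $\tau_{\mathrm D}n$ deletions and at most $\tau_{\mathrm I}n$ insertions. So there is a common subsequence $w_i$ of $z$ and $c_i$ with $|w_i|\ge N-\tau_{\mathrm D}n$. Since $|c_i|=n$, also $|w_i|\ge n-\tau_{\mathrm I}n$.

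Encode $w_i$ by the set $S_i\subseteq[N]$ of positions of $z$ that it uses; this is a binary word of weight at least $N-\tau_{\mathrm D}n$. Any two codewords $c_i,c_j$ share the common subsequence $z|_{S_i\cap S_j}$. Hence $d_{\mathrm L}(c_i,c_j)=2n-2\,\mathrm{LCS}(c_i,c_j)\ge 2\delta n$ gives $|S_i\cap S_j|\le \mathrm{LCS}(c_i,c_j)\le (1-\delta)n$ for all $i\ne j$.

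The core of the argument is a counting inequality for $r$-wise overlaps. For $r\in[L]$, count, over positions $p\in[N]$, the number $m_p$ of sets $S_i$ containing $p$. Then $\sum_p m_p=\sum_i|S_i|\ge(L+1)(N-\tau_{\mathrm D}n)$. We also need a pairwise upper bound $\sum_{i<j}|S_i\cap S_j|=\sum_p\binom{m_p}{2}\le\binom{L+1}{2}(1-\delta)n$.

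A plain pairwise Johnson-type count gives only the case $r=2$ flavour. To get the parameter $r$, I would use the inequality $\binom{m}{2}\ge (r-1)m-\binom{r}{2}$, valid for integers $m\ge0$ with equality at $m\in\{r-1,r\}$. This is the standard integer convexity trick that produces the $\max_r$ in \eqref{eq:LTX}. Summing over the $N$ positions gives
\[
\binom{L+1}{2}(1-\delta)n\;\ge\;(r-1)(L+1)(N-\tau_{\mathrm D}n)-\binom r2 N .
\]
Dividing by $(L+1)rn/2$ and writing $\nu=N/n$ yields
\[
\frac{L}{r}(1-\delta)\;\ge\;\frac{2(r-1)}{r}(\nu-\tau_{\mathrm D})-\frac{r-1}{L+1}\nu .
\]
The right side is increasing in $\nu$ when $2(L+1)\ge r$, which holds. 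Moreover $\nu\ge1-\tau_{\mathrm D}$; and to reach the stated form I expect to use also $|S_i|\ge n-\tau_{\mathrm I}n$ and $\nu\le1+\tau_{\mathrm I}$ to eliminate $\nu$. Together these should turn the inequality into $\tau_{\mathrm I}\ge\frac{2L-r+1}{L+1}(1-\tau_{\mathrm D})-\frac Lr(1-\delta)$, contradicting \eqref{eq:LTX} for the maximizing $r$.

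The main obstacle is exactly this elimination of $N$: finding the right linear combination of the two weight lower bounds ($N-\tau_{\mathrm D}n$ and $n-\tau_{\mathrm I}n$) and the length constraint so that the coefficients come out as $\frac{2L-r+1}{L+1}$. I would first try using $|S_i|\ge\max(N-\tau_{\mathrm D}n,\,n-\tau_{\mathrm I}n)$ and then optimize over $N$ in the worst case, checking the breakpoint $N=n(1-\tau_{\mathrm D}+\tau_{\mathrm I})$.

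The hypotheses $\tau_{\mathrm D}<\delta$ and the alphabet conditions $\tau_{\mathrm I}<q-1$, $\tau_{\mathrm D}<\frac{q-1}{q}$ should only be needed for nontriviality, for instance to ensure the $r=1$ case is consistent. If the direct count falls short, I would instead follow Liu--Tjuawinata--Xing more closely and bound the LCS of $r$ codewords by a Plotkin-type averaging. I would not expect to need the exact $q$ dependence.
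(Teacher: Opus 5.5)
Your framework (position sets $S_i\subseteq[N]$ in $z$, the bound $|S_i\cap S_j|\le(1-\delta)n$, and an integer-convexity inequality summed over positions) is the one the paper uses to prove Corollary~\ref{cor:LD-radius}, from which this statement follows. However, your first step is wrong. By definition, $c_i\in B_{\mathrm L}(z,\tau_{\mathrm D}n,\tau_{\mathrm I}n)$ means $c_i$ arises from $z$ by at most $\tau_{\mathrm D}n$ \emph{insertions} and at most $\tau_{\mathrm I}n$ \emph{deletions}; you swapped them. With $a$ insertions and $b$ deletions the common subsequence has length $N-b=n-a$, so the valid bounds are $|S_i|\ge n-\tau_{\mathrm D}n$ and $|S_i|\ge N-\tau_{\mathrm I}n$. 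Your bound $|S_i|\ge N-\tau_{\mathrm D}n$ is false in general: for $\tau_{\mathrm D}=0$ and $N>n$ it would force $|S_i|>n$. This is not cosmetic. With your bounds the binding case is $N=n(1+\tau_{\mathrm D}-\tau_{\mathrm I})$, $|S_i|=n(1-\tau_{\mathrm I})$, and the elimination yields a threshold with $\tau_{\mathrm I}$ and $\tau_{\mathrm D}$ interchanged rather than \eqref{eq:LTX}. Likewise, feeding $\nu\ge1-\tau_{\mathrm D}$ into your displayed inequality gives a condition not involving $\tau_{\mathrm I}$ at all, which is a symptom of the swap. The breakpoint $N=n(1-\tau_{\mathrm D}+\tau_{\mathrm I})$ you propose to check belongs to the correct bounds, not yours.

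Once the orientation is fixed, the step you call the main obstacle is short, and neither a Plotkin-type $r$-wise bound nor any $q$-dependence is needed. Apply your inequality at index $r+1$, i.e.\ $\binom{m}{2}\ge rm-\binom{r+1}{2}$. This is the paper's $1_{\{a>0\}}\ge\frac{2a}{r+1}-\frac{2\binom{a}{2}}{r(r+1)}$ after bounding the union by $N$. With $w=\max(n-\tau_{\mathrm D}n,\,N-\tau_{\mathrm I}n)$ it gives $N\ge\frac{2(L+1)}{r+1}w-\frac{L(L+1)}{r(r+1)}(1-\delta)n$. If $N\le n(1-\tau_{\mathrm D}+\tau_{\mathrm I})$, use $w\ge n(1-\tau_{\mathrm D})$. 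Otherwise use $w\ge N-\tau_{\mathrm I}n$, note that $N$ then has coefficient $\frac{2L+1-r}{r+1}>0$, and substitute the lower bound on $N$. Both cases give $\tau_{\mathrm I}\ge\frac{L+1}{r+1}\bigl[\frac{2L-r+1}{L+1}(1-\tau_{\mathrm D})-\frac{L}{r}(1-\delta)\bigr]$. The paper reaches the same point via Lemma~\ref{lem:length-reduction}, fixing the length $N_0=n-s+t$ and $|S_c|=n-s$.

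Note also the index shift in your plan. In your indexing, $r=1$ is vacuous, and the $r=L$ term of \eqref{eq:LTX} requires index $L+1$. What comes out is $\frac{L+1}{r+1}$ times the expression in \eqref{eq:LTX}, not that expression itself. To finish, observe that if the maximum in \eqref{eq:LTX} is $\le 0$, the hypothesis contradicts $\tau_{\mathrm I}\ge0$. Otherwise the factor $\frac{L+1}{r+1}\ge1$ only enlarges the maximizing term. The paper instead notes that the $r=L$ term equals $\delta-\tau_{\mathrm D}>0$.
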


Their method is to associate each codeword $c$ in the decoding list of a fixed received word $z$ with the set $Y_c\subseteq[N]$ of positions in $z$ corresponding to a longest common subsequence of $z$ and $c$. 
They then analyze unions of higher-order intersections of these sets via the inclusion--exclusion principle and take carefully chosen linear combinations of the resulting inequalities to obtain their bound.

\paragraph{Code-cardinality and rate bounds.}
As in the classical Elias--Bassalygo argument for the Hamming metric, averaging converts a uniform local list-size bound into a global bound on code cardinality. Specifically, if a code $C\subseteq[q]^n$ satisfies $|B_{\mathrm L}(z,s,t)\cap C|\leq L$ for every $z\in[q]^{n-s+t}$, then
\begin{equation}\label{eq:averaging-bound}
|C|\leq L\cdot\frac{q^{n-s+t}}{I_q(n-s,t)}, 
\end{equation}
where $I_q(m,t)=\sum_{i=0}^{t}\binom{m+t}{i}(q-1)^i$ is the number of length-$(m+t)$ supersequences of a fixed $q$-ary word of length $m$.

Applying the averaging estimate \eqref{eq:averaging-bound} to the Hayashi--Yasunaga bound \eqref{eq:HY} with $s=0$, Yasunaga~\cite{yasunaga2024improved} obtained the following Elias-type bound: 
for positive integers $q,n,d$ with $q\geq2$ and $d<2n$, and every nonnegative integer $t$ satisfying $t<nd/(2n-d)$, 
\begin{equation}\label{eq:Yasunaga}
A_q^{(\mathrm L)}(n,d)\leq\frac{(n+t)d}{(n+t)d-2nt}\cdot\frac{q^{n+t}}{I_q(n,t)}.
\end{equation}
As a consequence of \eqref{eq:Yasunaga}, for $0<\delta<1-1/q$, the rate function $R_q^{(\mathrm L)}(\delta)=\limsup\limits_{n\to\infty}\frac{1}{n}\log_qA_q^{(\mathrm L)}(n,\lfloor2\delta n\rfloor)$ satisfies
\begin{equation}\label{eq:Yasunaga-asymptotic}
    R_q^{(\mathrm L)}(\delta)
    \leq
    \frac{1-H_q(\delta)}{1-\delta},
\end{equation}
where $H_q(x)=x\log_q(q-1)-x\log_q x-(1-x)\log_q(1-x)$ denotes the $q$-ary entropy function.

In a very recent paper, Con, Doron, and Ribeiro~\cite[Theorem~1.3]{con2026insertion} obtained an improved upper bound on the binary deletion list-decoding capacity that is asymptotically tight as the deletion fraction tends to zero. For $0<\delta<1/20$, their result implies
\begin{equation}\label{eq:CDR-rate}
R_2^{(\mathrm L)}(\delta)\leq1-\delta-(\gamma(\delta)-\delta)H_2\left(\frac{\delta}{\gamma(\delta)-\delta}\right),
\end{equation}
where $\gamma(x)=\frac{1}{2}-\sqrt{\frac{H_2(x)\ln 2}{2}}$.

\subsection{Our contribution}

\noindent In this work, we strengthen both Theorem~\ref{thm:HY} and Theorem~\ref{thm:LTX} by revealing a binary constant-weight structure in the decoding lists. 
Specifically, after reducing the problem to received words of length $n-s+t$, we choose, 
for each codeword $c$ in the decoding list of $z$, a common subsequence of $z$ and $c$ of fixed length $n-s$, rather than a longest common subsequence as Liu, Tjuawinata, and Xing do. 
The indicator vectors of the corresponding position sets in $z$ form a binary constant-weight code of length $n-s+t$, weight $n-s$, and minimum Hamming distance at least $d-2s$. 

Our main result is stated as follows.

\begin{theorem}\label{thm:main-Johnson}
Let $q\geq2$ be an integer, and let $n,d$ be positive integers with $d\leq2n$.
Let $C\subseteq [q]^n$ be a code with minimum Levenshtein distance at least $d$. 
Let $s,t$ be nonnegative integers with $s\le \lfloor\frac{d-1}{2}\rfloor$. 
Then, for every $N\in[n-s,n+t]$ and every $z\in [q]^{N}$, we have
\begin{equation}\label{eq:main-Johnson}
    |B_{\mathrm{L}}(z,s,t)\cap C|\leq A^{(\mathrm{H})}(n-s+t,~ d-2s, ~t),
\end{equation}
where $A^{(\mathrm H)}(n,d,w)$ denotes the maximum cardinality of a binary code of block length $n$, minimum Hamming distance at least $d$, and constant weight $w$.
\end{theorem}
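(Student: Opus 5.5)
The plan follows the outline in the contribution paragraph: reduce to received words of length exactly $n-s+t$, then encode each codeword in the decoding list by a set of positions in $z$ of fixed size. Weight $n-s$ and weight $t$ give the same count, since complementation preserves distances and $A^{(\mathrm H)}(N,d',w)=A^{(\mathrm H)}(N,d',N-w)$.

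\textbf{Step 1 (reduction).} Take $z\in[q]^N$ with $N\in[n-s,n+t]$, and let $c\in B_{\mathrm L}(z,s,t)\cap C$. Then $c$ arises from $z$ by $i\le s$ insertions and $j\le t$ deletions, with $N+i-j=n$. Equivalently, $z$ and $c$ have a common subsequence of length $N-j=n-i\ge n-s$. Since $N\ge n-s$ and $n\ge n-s$, we can shorten it to a common subsequence of length exactly $n-s$. Now extend $z$ to a word $z'$ of length $n-s+t$ by appending $n-s+t-N\ge 0$ arbitrary symbols. Then $z$ is a subsequence of $z'$, so every $c$ in the list shares a common subsequence of length $n-s$ with $z'$, and $c\in B_{\mathrm L}(z',s,t)$. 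It therefore suffices to treat $N=n-s+t$.

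\textbf{Step 2 (encoding).} For each $c$ in the list, fix a common subsequence $w_c$ of $z$ and $c$ of length $n-s$. Let $Y_c\subseteq[n-s+t]$ be its position set in $z$, with $|Y_c|=n-s$, and let $X_c\subseteq[n]$ be its position set in $c$. Take $v_c\in\{0,1\}^{n-s+t}$ to be the indicator vector of the complement of $Y_c$; it has weight $t$. The goal is $d_{\mathrm H}(v_c,v_{c'})\ge d-2s$ for $c\ne c'$. This gives injectivity of $c\mapsto v_c$ as well, because $d-2s\ge1$ by the hypothesis $s\le\lfloor (d-1)/2\rfloor$, so the bound follows.

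\textbf{Step 3 (distance, the key step).} Write $k=|Y_c\cap Y_{c'}|$. Restricting $z$ to $Y_c\cap Y_{c'}$ gives a subsequence $u$ of $z$ of length $k$. Because $u$ is a subsequence of $w_c$, which is embedded in $c$ at $X_c$, the word $u$ is a subsequence of $c$; in the same way it is a subsequence of $c'$. So $u$ is a common subsequence of $c$ and $c'$, and
\[
d\le d_{\mathrm L}(c,c')=2n-2\,\mathrm{LCS}(c,c')\le 2n-2k.
\]
On the other hand,
\[
d_{\mathrm H}(v_c,v_{c'})=|Y_c\triangle Y_{c'}|=2(n-s)-2k\ge 2(n-s)-(2n-d)=d-2s.
\]
I expect this step to be the main obstacle, though a mild one. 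The point to check carefully is that the positions in $Y_c\cap Y_{c'}$ pull back to increasing position sequences in both $c$ and $c'$. This holds because the embeddings are order-preserving: the $\ell$-th position of $Y_c$ corresponds to the $\ell$-th position of $X_c$. Composing with $Y_c\cap Y_{c'}\subseteq Y_c$ gives an increasing map into $X_c$, and likewise into $X_{c'}$.

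Combining the three steps, the vectors $v_c$ form a binary constant-weight code of length $n-s+t$, weight $t$ and minimum distance at least $d-2s$, with one codeword per element of the list. Hence $|B_{\mathrm L}(z,s,t)\cap C|\le A^{(\mathrm H)}(n-s+t,d-2s,t)$.
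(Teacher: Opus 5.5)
Your Steps~2 and~3 are correct and match the paper's Lemma~\ref{lem:canonical-length}. The paper encodes by $1_{S_c}$ of weight $n-s$ and then uses $A^{(\mathrm H)}(N_0,d-2s,n-s)=A^{(\mathrm H)}(N_0,d-2s,t)$, while you take complements directly; these are the same. The genuine gap is in Step~1, where you assert $n-s+t-N\ge 0$. The theorem allows every $N\in[n-s,n+t]$, and whenever $s\ge 1$ the lengths $n-s+t<N\le n+t$ are admissible. For such $z$ you cannot pad up to length $n-s+t$. Keeping $z$ as it is does not work either: encoding position sets inside $[N]$ gives a constant-weight code of length $N>n-s+t$, which does not by itself yield $A^{(\mathrm H)}(n-s+t,d-2s,t)$.

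The missing case is handled by deleting symbols instead of appending them. If $N>N_0=n-s+t$, delete any $k=N-N_0$ symbols of $z$ to obtain $z_0$. Take a transformation from $z$ to $c$ using $i\le s$ insertions and $j\le t$ deletions. At least $N-j\ge N-t$ original symbols of $z$ survive, in order, in $c$. At most $k$ of these are lost in passing to $z_0$, so $\operatorname{LCS}(z_0,c)\ge N-t-k=n-s$. Your Steps~2--3 then apply verbatim to $z_0$. This is the second case of the paper's Lemma~\ref{lem:length-reduction}. There it is phrased as reinserting the $k$ deleted symbols and then applying the original transformation, which uses $j\le t$ deletions and $i+k=j+s-t\le s$ insertions. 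With this case added, your argument is complete and coincides with the paper's proof.
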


\begin{remark}
Recall that the Johnson bound for constant-weight codes \cite[Theorem~3]{johnson1962new} states that if $dn>2w(n-w)$, then
\begin{equation}\label{eq:Johnson-bound}
    A^{(\mathrm H)}(n,d,w)\leq\frac{dn}{dn-2w(n-w)}.
\end{equation}
If $d>2s+\frac{2t(n-s)}{n-s+t}$, applying \eqref{eq:Johnson-bound} to $A^{(\mathrm H)}(n-s+t,d-2s,t)$ gives
\begin{equation*}
    A^{(\mathrm{H})}(n-s+t,d-2s,t)
    \leq\frac{(n-s+t)(d-2s)}{(n-s+t)(d-2s)-2t(n-s)}
    \leq\frac{(n+t)d}{(n-s+t)(d-2s)-2t(n-s)}.
\end{equation*}
Thus, \eqref{eq:main-Johnson} strengthens the Hayashi--Yasunaga bound \eqref{eq:HY}. The second inequality is strict whenever $s>0$;
even when $s=0$, the first inequality can be strict. 
For example, when $(n,d,s,t)=(5,4,0,3)$, our list-size bound is $A^{(\mathrm H)}(8,4,3)=8$, whereas the Hayashi--Yasunaga bound is $16$. 
Moreover, our bound is attained for every $q\geq8$ by Theorem~\ref{thm:main-tightness}.
\end{remark}

Applying the averaging argument to Theorem~\ref{thm:main-Johnson} yields the following bound. 

\begin{theorem}\label{thm:main-Elias}
Let $q\geq2$ be an integer, and let $n,d$ be positive integers with $d\leq2n$.
Then, for all nonnegative integers $s,t$ with 
$s\le \lfloor\frac{d-1}{2}\rfloor$, we have
\begin{equation}\label{eq:main-Elias}
    A_q^{(\mathrm L)}(n,d)\leq A^{(\mathrm H)}(n-s+t,~d-2s,~t)\cdot\frac{q^{n-s+t}}{I_q(n-s,~t)}.
\end{equation}
\end{theorem}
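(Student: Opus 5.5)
We combine Theorem~\ref{thm:main-Johnson} with the averaging estimate \eqref{eq:averaging-bound}, carried out as a double count. Let $C\subseteq[q]^n$ have minimum Levenshtein distance at least $d$ and $|C|=A_q^{(\mathrm L)}(n,d)$. Fix nonnegative integers $s,t$ with $s\le\lfloor\frac{d-1}{2}\rfloor$, set $N=n-s+t$, and write $L=A^{(\mathrm H)}(N,d-2s,t)$. We count
\[
P=\bigl|\{(c,z)\in C\times[q]^{N}: c\in B_{\mathrm L}(z,s,t)\}\bigr|
\]
in two ways.

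First we bound $P$ from below by fixing $c$. Delete the last $s$ symbols of $c$ to get a word $c'$ of length $n-s$. Every length-$N$ supersequence $z$ of $c'$ satisfies $c\in B_{\mathrm L}(z,s,t)$: starting from $z$, delete the $t$ inserted symbols to recover $c'$, then insert the $s$ removed symbols to recover $c$. This uses at most $s$ insertions and at most $t$ deletions, matching the convention for $B_{\mathrm L}(z,\cdot,\cdot)$ fixed in the introduction. The number of such supersequences is $I_q(n-s,t)$; this count is independent of the word, which is the classical Levenshtein fact that the paper already uses in \eqref{eq:averaging-bound}. Hence $P\ge |C|\,I_q(n-s,t)$.

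Next we bound $P$ from above by fixing $z$. Since $N\in[n-s,n+t]$ and $s\le\lfloor\frac{d-1}{2}\rfloor$, Theorem~\ref{thm:main-Johnson} gives $|B_{\mathrm L}(z,s,t)\cap C|\le L$ for every $z\in[q]^N$. Therefore $P\le L\,q^{N}$. Comparing the two bounds yields
\[
|C|\le L\cdot\frac{q^{n-s+t}}{I_q(n-s,t)},
\]
which is \eqref{eq:main-Elias}.

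The only point that needs care is the direction convention for $B_{\mathrm L}$, that is, checking that a supersequence of the truncated codeword really lies within the stated radii. The rest follows directly from Theorem~\ref{thm:main-Johnson}.
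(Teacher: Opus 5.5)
Your proof is correct and is essentially the paper's averaging argument, written as a double count of pairs $(c,z)$ with $c\in B_{\mathrm L}(z,s,t)$ rather than as an expectation over a uniformly random $z\in[q]^{n-s+t}$. Because you index the count by codewords rather than by their shortened images, you also avoid the paper's separate check that the shortening map $\pi$ is injective.
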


\begin{remark}
Observe that if $d-2s>2t$, then $A^{(\mathrm{H})}(n-s+t,~d-2s,~t)=1$. Thus, \eqref{eq:main-Elias} implies that if $s+t\leq\lfloor\frac{d-1}{2}\rfloor$, then
    \begin{equation}\label{eq:shortened-sphere-packing}
        A_q^{(\mathrm{L})}(n,d)\leq\frac{q^{n-s+t}}{I_q(n-s,~t)}.
    \end{equation}
    In particular, 
    \begin{itemize}
        \item Taking $s=\lfloor\frac{d-1}{2}\rfloor$ and $t=0$ in \eqref{eq:shortened-sphere-packing}, we recover the Singleton bound: $A_{q}^{(\mathrm{L})}(n,d)\leq q^{n-\lceil\frac{d}{2}\rceil+1}$.
        \item Taking $s=0$ and $t=\lfloor\frac{d-1}{2}\rfloor$ in \eqref{eq:shortened-sphere-packing}, we recover the sphere-packing bound: $A_q^{(\mathrm{L})}(n,d)\leq\frac{q^{n+\lfloor\frac{d-1}{2}\rfloor}}{I_q(n,\lfloor\frac{d-1}{2}\rfloor)}$.
    \end{itemize}
\end{remark}

Finally, combining Theorem~\ref{thm:main-Elias} with the MRRW bound for binary constant-weight codes yields the following asymptotic rate bound.

\begin{theorem}\label{thm:main-asymptotic}
For every integer $q\geq2$ and every $0<\delta< 1$, we have
\begin{equation}\label{eq:main-asymptotic}
R_q^{(\mathrm L)}(\delta)\leq\inf_{\substack{0\le \sigma<\delta\\0\leq\omega\leq1-1/q}}
\frac{1-\sigma}{1-\omega}\left(1-H_q(\omega)+\frac{1}{\log_2 q}R_{\mathrm{LP}}\left(\frac{2(\delta-\sigma)(1-\omega)}{1-\sigma},~\omega\right)\right), 
\end{equation}
where 
\begin{equation}\label{eq:MRRW}
R_{\mathrm{LP}}(\delta,\omega)
=
\begin{cases}
H_2\left(\frac{1}{2}\left(1-\sqrt{1-\left(\sqrt{4\omega(1-\omega)-\delta(2-\delta)}-\delta\right)^2}\right)\right), & \delta< 2\omega(1-\omega), \\
0, & \delta\ge 2\omega(1-\omega).
\end{cases}
\end{equation}
\end{theorem}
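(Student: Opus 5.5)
We derive Theorem~\ref{thm:main-asymptotic} from Theorem~\ref{thm:main-Elias} by taking parameters proportional to $n$ and evaluating each factor asymptotically. Fix $\sigma\in[0,\delta)$ and $\omega\in[0,1-1/q]$; boundary cases such as $\omega=0$ can be handled by continuity or separately. For $d=\lfloor 2\delta n\rfloor$ we choose
\[
s=\lfloor \sigma n\rfloor,\qquad t=\Bigl\lfloor \tfrac{\omega}{1-\omega}(n-s)\Bigr\rfloor .
\]
Then $s\le\lfloor (d-1)/2\rfloor$ for large $n$, because $\sigma<\delta$. Put $m=n-s+t$, so that $m/n\to(1-\sigma)/(1-\omega)$ and $t/m\to\omega$. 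Theorem~\ref{thm:main-Elias} gives
\[
\tfrac1n\log_q A_q^{(\mathrm L)}(n,d)\le \tfrac1n\log_q A^{(\mathrm H)}(m,d-2s,t)+\tfrac{m}{n}-\tfrac1n\log_q I_q(n-s,t).
\]

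The averaging factor is routine. We have $I_q(n-s,t)\ge\binom{m}{t}(q-1)^t=q^{m H_q(t/m)+o(n)}$. Hence $\frac{m}{n}-\frac1n\log_q I_q\le \frac{1-\sigma}{1-\omega}\bigl(1-H_q(\omega)\bigr)+o(1)$. The restriction $\omega\le1-1/q$ is only needed to make this the natural regime; the lower bound on $I_q$ holds for every $\omega$.

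The list-size factor uses the MRRW (linear programming) bound for binary constant-weight codes. In normalized form it states that
\[
\tfrac1m\log_2 A^{(\mathrm H)}(m,\lfloor\rho m\rfloor,\lfloor\omega m\rfloor)\le R_{\mathrm{LP}}(\rho,\omega)+o(1),
\]
uniformly enough to pass to limits. This is the constant-weight version of the second MRRW bound, given in \eqref{eq:MRRW}, and I would cite it. Here the relative distance is
\[
\rho=\frac{d-2s}{m}\to\frac{2(\delta-\sigma)(1-\omega)}{1-\sigma}.
\]
Converting from base $2$ to base $q$ and multiplying by $m/n$ gives the contribution $\frac{1-\sigma}{1-\omega}\cdot\frac{1}{\log_2 q}R_{\mathrm{LP}}(\rho,\omega)$. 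Adding the two contributions and taking $\limsup$, then the infimum over $(\sigma,\omega)$, gives \eqref{eq:main-asymptotic}.

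The main obstacle is the limit step for the LP bound. The parameters $d-2s$ and $t$ are integers that only approximate $\rho m$ and $\omega m$, so we need continuity of $R_{\mathrm{LP}}$ in both arguments. $R_{\mathrm{LP}}$ is continuous except possibly at the threshold $\rho=2\omega(1-\omega)$, where it drops to $0$. At that threshold the Johnson-type bound keeps $A^{(\mathrm H)}$ polynomial, so the contribution is still $0$ asymptotically. We also need monotonicity: $A^{(\mathrm H)}$ is nonincreasing in the distance, which lets us replace $\rho$ by a slightly smaller rational and absorb rounding. With these checks the infimum bound follows.
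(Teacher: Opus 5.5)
Your proposal is correct and follows essentially the paper's route: apply Theorem~\ref{thm:main-Elias} with $s\approx\sigma n$ and $t=\lfloor\frac{\omega}{1-\omega}(n-s)\rfloor$, estimate $I_q$ via $H_q$, and bound $A^{(\mathrm H)}$ by MRRW. This choice of $t$ automatically gives $t=\lfloor\omega m\rfloor$, so you need no continuity in $\omega$.

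The only real difference is how the distance rounding is handled. The paper shifts $s$ down by a constant $K$ (when $\sigma>0$) so that $d-2s\ge\lfloor\Delta m\rfloor$ with $\Delta=\frac{2(\delta-\sigma)(1-\omega)}{1-\sigma}$. It then applies the definition of $R^{(\mathrm H)}(\Delta,\omega)$ directly, with no continuity needed. Your monotonicity argument with $\rho'\uparrow\Delta$ instead needs continuity of $R_{\mathrm{LP}}$ in its first argument, and this holds everywhere. At $\rho=2\omega(1-\omega)$ the inner square root equals $\rho$, so the formula gives $H_2(0)=0$ and there is no discontinuity. So your threshold worry is unnecessary, and so is the Johnson-bound patch, which would not apply there directly anyway because its denominator vanishes.
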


\subsection{Comparison}
\paragraph{Johnson-type bounds.}
We first recall the following equivalent form of the Hayashi--Yasunaga bound, as reformulated by Liu, Tjuawinata, and Xing~\cite[Theorem~5.1]{liu2023lower}.
Let $C\subseteq[q]^n$ be a code with minimum Levenshtein distance at least $d=2\delta n$, where $0<\delta<1$.
Let $L$ be a positive integer, and let $\tau_{\mathrm I}$ and $\tau_{\mathrm D}$ be nonnegative real numbers satisfying $\tau_{\mathrm D}<\delta$.
If
\begin{equation}\label{eq:HY-list-decodability}
    \tau_{\mathrm I}<\frac{(L+1)(1-\tau_{\mathrm D})(\delta-\tau_{\mathrm D})-\delta}{1+L(1-\delta)},
\end{equation}
then $C$ is $(\tau_{\mathrm I},\tau_{\mathrm D},L)$-insdel-list-decodable.

We next state a list-decoding consequence of Theorem~\ref{thm:main-Johnson}. 
\begin{corollary}\label{cor:LD-radius}
Let $q\geq2$ be an integer, and let $n,d$ be positive integers with $d\leq2n$.
Let $C\subseteq[q]^n$ be a code with minimum Levenshtein distance at least $d=2\delta n$, and let $L\geq2$ be an integer.
Let $\tau_{\mathrm I}$ and $\tau_{\mathrm D}$ be nonnegative real numbers satisfying $\tau_{\mathrm D}<\delta$.
If
\begin{equation}\label{eq:LD-radius}
    \tau_{\mathrm I}<\max_{r\in[L]}\left\{\frac{L+1}{r+1}\left[\frac{2L-r+1}{L+1}(1-\tau_{\mathrm D})-\frac{L}{r}(1-\delta)\right]\right\},
\end{equation}
then $C$ is $(\tau_{\mathrm I},\tau_{\mathrm D},L)$-insdel-list-decodable.
\end{corollary}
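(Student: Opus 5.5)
The plan is to apply Theorem~\ref{thm:main-Johnson} with $s=\lfloor\tau_{\mathrm D}n\rfloor$ and $t=\lfloor\tau_{\mathrm I}n\rfloor$, and then show directly that $A^{(\mathrm H)}(n-s+t,d-2s,t)\le L$ whenever \eqref{eq:LD-radius} holds.

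First, the hypothesis of Theorem~\ref{thm:main-Johnson} is satisfied. Since $\tau_{\mathrm D}<\delta$, we get $s\le\tau_{\mathrm D}n<\delta n=d/2$. As $s$ is an integer, this gives $s\le\lfloor\frac{d-1}{2}\rfloor$. Every received word of length $N\in[n-\tau_{\mathrm D}n,n+\tau_{\mathrm I}n]$ is then covered by the theorem. So it suffices to rule out $L+1$ codewords in a binary constant-weight code of length $N'=n-s+t$, weight $t$ and minimum distance at least $d-2s$.

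Next I pass to complements, which are sets $S_1,\dots,S_m\subseteq[N']$ with $m=L+1$ and $|S_i|=w:=n-s$. Complementation preserves Hamming distance, so for $i\ne j$ we have $|S_i\triangle S_j|\ge d-2s$. Since both sets have size $w$, this means $|S_i\cap S_j|\le w-(d-2s)/2=n-d/2=:\lambda$.

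Now I double count with degrees. For $x\in[N']$ let $\deg(x)=|\{i: x\in S_i\}|$. Then
\[
\sum_x \deg(x)=mw \quad\text{and}\quad \sum_x \deg(x)(\deg(x)-1)=\sum_{i\ne j}|S_i\cap S_j|\le m(m-1)\lambda .
\]
Fix an integer $r\in[L]$. Because $\deg(x)$ is an integer, $(\deg(x)-r)(\deg(x)-r-1)\ge0$, which is equivalent to
\[
\deg(x)(\deg(x)-1)\ge 2r\deg(x)-r(r+1).
\]
Summing over $x$ gives
\[
m(m-1)\lambda\ \ge\ 2rmw-r(r+1)N'.
\]
Substituting $m=L+1$, $w=n-s$, $\lambda=n(1-\delta)$ and $N'=n-s+t$, and dividing by $n$, this rearranges to
\[
r(r+1)\frac{t}{n}\ \ge\ r(2L+1-r)\Bigl(1-\frac sn\Bigr)-L(L+1)(1-\delta).
\]

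Finally I compare with the floored parameters. The coefficient $r(2L+1-r)$ is positive, and $s/n\le\tau_{\mathrm D}$ and $t/n\le\tau_{\mathrm I}$. Hence the last inequality implies the same inequality with $s/n,t/n$ replaced by $\tau_{\mathrm D},\tau_{\mathrm I}$. Dividing by $r(r+1)$ turns it into
\[
\tau_{\mathrm I}\ge\frac{L+1}{r+1}\left[\frac{2L-r+1}{L+1}(1-\tau_{\mathrm D})-\frac Lr(1-\delta)\right].
\]
This holds for every $r\in[L]$, contradicting \eqref{eq:LD-radius} for the maximizing $r$. Therefore $A^{(\mathrm H)}(N',d-2s,t)\le L$, and Theorem~\ref{thm:main-Johnson} yields $(\tau_{\mathrm I},\tau_{\mathrm D},L)$-insdel-list-decodability.

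The main obstacle is the choice of the integer-convexity inequality $(\deg-r)(\deg-r-1)\ge0$ with a free parameter $r$; it is what produces the exact form of the bound. The rest is routine algebra together with checking the direction of the monotonicity when flooring.
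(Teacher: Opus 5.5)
Your proposal is correct and follows the paper's proof essentially step for step: the same choice $s=\lfloor\tau_{\mathrm D}n\rfloor$, $t=\lfloor\tau_{\mathrm I}n\rfloor$, the same reduction to $L+1$ sets of size $n-s$ with pairwise intersections at most $n-d/2$, and the same integer inequality $(a-r)(a-r-1)\ge 0$ summed over points. The only cosmetic difference is that you sum this inequality over all of $[N']$ directly, whereas the paper first lower-bounds $|\bigcup_i S_i|$ via $1_{\{a_k>0\}}$ and then uses $N_0\ge|\bigcup_i S_i|$; both routes give the identical inequality.
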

Since the $r=L$ branch equals $\delta-\tau_{\mathrm D}>0$ and $\frac{L+1}{r+1}\geq1$, Corollary~\ref{cor:LD-radius} strengthens Theorem~\ref{thm:LTX}.

Figure~\ref{fig:list-decoding-comparison} compares our insertion-radius threshold \eqref{eq:LD-radius} with the Liu--Tjuawinata--Xing threshold \eqref{eq:LTX}, 
the Hayashi--Yasunaga threshold \eqref{eq:HY-list-decodability}, and the unique-decoding threshold $\tau_{\mathrm I}<\delta-\tau_{\mathrm D}$ for $L=2$ and $\delta=0.9$.

\begin{figure}[H]
    \centering
    \includegraphics[width=0.72\linewidth]{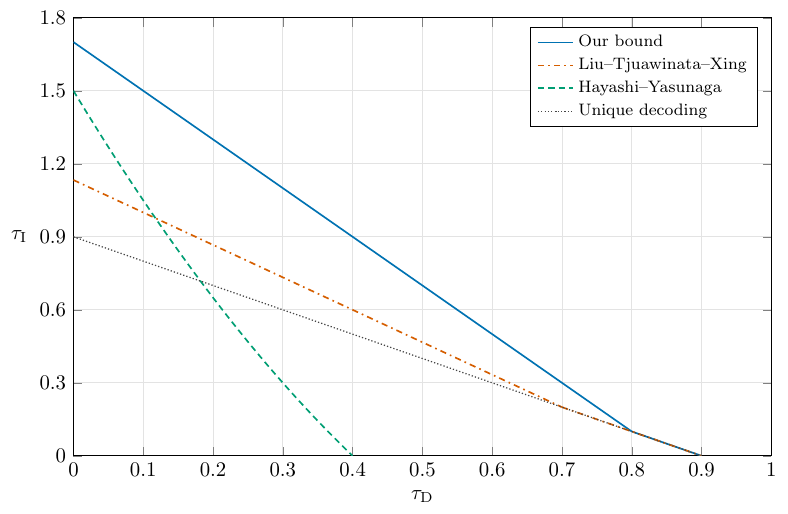}
    \caption{Comparison of Johnson-type bounds for $L=2$ and $\delta=0.9$.}
    \label{fig:list-decoding-comparison}
\end{figure}

\paragraph{Code-rate bounds.}

Setting $(\sigma,\omega)=(0,\delta)$ in the right-hand side of \eqref{eq:main-asymptotic} recovers Yasunaga's Elias-type bound \eqref{eq:Yasunaga-asymptotic}, since $R_{\mathrm{LP}}(2\delta(1-\delta),\delta)=0$. 
In fact, the improvement is strict throughout $0<\delta<1-\frac{1}{q}$. 

To see this, fix $0<\delta<1-1/q$, set $\sigma=0$, and choose $0<\varepsilon<1-1/q-\delta$. Let $\omega_\varepsilon=\delta+\varepsilon$ and $\Delta_\varepsilon=2\delta(1-\delta-\varepsilon)$. 
Writing $F_q(\omega)=(1-H_q(\omega))/(1-\omega)$, we have $F_q'(\delta)=\log_q\!\left(q\delta/(q-1)\right)/(1-\delta)^2<0$.
Moreover, $\Delta_\varepsilon<2\omega_\varepsilon(1-\omega_\varepsilon)$, and
\begin{equation*}
\sqrt{4\omega_\varepsilon(1-\omega_\varepsilon)-\Delta_\varepsilon(2-\Delta_\varepsilon)}-\Delta_\varepsilon=\frac{\varepsilon}{\delta}+O(\varepsilon^2).
\end{equation*}
Consequently, the argument of $H_2$ in $R_{\mathrm{LP}}(\Delta_\varepsilon,\omega_\varepsilon)$ is $\varepsilon^2/(4\delta^2)+O(\varepsilon^3)$, and hence
\begin{equation*}
R_{\mathrm{LP}}(\Delta_\varepsilon,\omega_\varepsilon)=\frac{\varepsilon^2}{2\delta^2}\log_2\frac{1}{\varepsilon}+O(\varepsilon^2)=o(\varepsilon).
\end{equation*}
Therefore, the value of the objective in \eqref{eq:main-asymptotic} at $(\sigma,\omega)=(0,\delta+\varepsilon)$ is $F_q(\delta)+F_q'(\delta)\varepsilon+o(\varepsilon)$, which is strictly smaller than $F_q(\delta)$ for all sufficiently small $\varepsilon>0$.

Figure~\ref{fig:rate-yasunaga} illustrates this comparison for $q=2$ and $q=4$. 

\begin{figure}[H]
    \centering
    \includegraphics[width=0.88\linewidth]{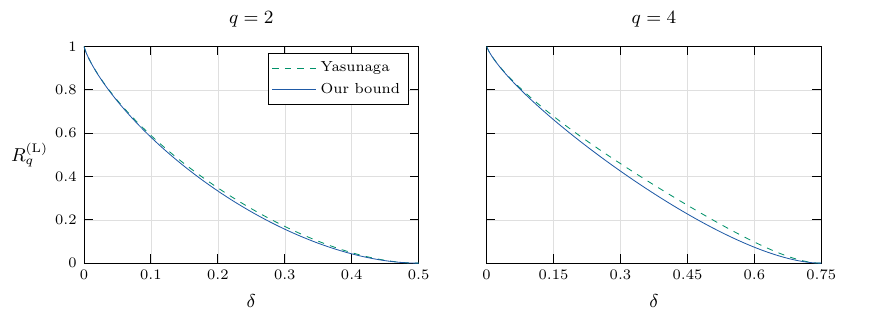}
    \caption{Comparison between our bound and Yasunaga's bound for $q=2$ and $q=4$.}
    \label{fig:rate-yasunaga}
\end{figure}

For both $q=2$ and $q=4$, numerical optimization yielded $\sigma=0$
to numerical precision at every sampled value of $\delta$. The
largest absolute reductions relative to Yasunaga's bound were
approximately $0.01468$ at $\delta\approx0.230$ and $0.04082$ at
$\delta\approx0.420$, respectively.

For binary codes in the small-distance regime $0<\delta<1/20$, Figure~\ref{fig:rate-cdr} further compares our bound with the Con--Doron--Ribeiro bound \eqref{eq:CDR-rate}. 
The two bounds in Figure~\ref{fig:rate-cdr} cross numerically at $\delta\approx0.015895$: the Con--Doron--Ribeiro bound is smaller before the crossing, whereas our bound is smaller afterward.

\begin{figure}[H]
    \centering
    \includegraphics[width=0.72\linewidth]{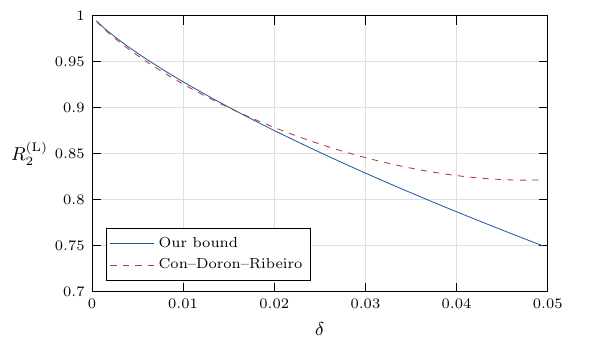}
    \caption{Comparison between our bound and the Con--Doron--Ribeiro bound for binary codes in the range $0<\delta<1/20$.}
    \label{fig:rate-cdr}
\end{figure}

\subsection{Organization}
\noindent The remainder of this paper is organized as follows. 
In Section~\ref{sec:2}, we review the basic definitions and preliminary results used throughout the paper. 
In Section~\ref{sec:3}, we prove Theorem~\ref{thm:main-Johnson} and Corollary~\ref{cor:LD-radius}. 
In Section~\ref{sec:4}, we establish Theorems~\ref{thm:main-Elias} and~\ref{thm:main-asymptotic}. 
Finally, in Section~\ref{sec:5}, we establish the large-alphabet tightness of our local bound and discuss the limitations of the resulting global bound.

\section{Preliminaries}\label{sec:2}

\noindent For two words $x,y\in[q]^n$, the Levenshtein distance $d_{\mathrm L}(x,y)$ is defined as the minimum number of insertions and deletions required to transform $x$ into $y$. 
It is well known that
\begin{equation*}
d_{\mathrm L}(x,y)=2n-2\operatorname{LCS}(x,y),
\end{equation*}
where $\operatorname{LCS}(x,y)$ denotes the length of a longest common subsequence of $x$ and $y$. 
A code $C\subseteq[q]^n$ is said to have minimum Levenshtein distance at least $d$ if $d_{\mathrm L}(x,y)\geq d$ for all distinct $x,y\in C$. 
We denote by $A_q^{(\mathrm L)}(n,d)$ the maximum cardinality of such a code.

For $x\in[q]^n$ and a nonnegative integer $t$, let $I_t(x)$ denote the set of words in $[q]^{n+t}$ obtained by inserting exactly $t$ symbols into $x$. 
Similarly, for $y\in[q]^{n+t}$, let $D_t(y)$ denote the set of words in $[q]^n$ obtained by deleting exactly $t$ symbols from $y$.
Thus, $y\in I_t(x)$ if and only if $x\in D_t(y)$.
Levenshtein~\cite{levenshtein2001efficient} showed that
\begin{equation*}
|I_t(x)|=\sum_{i=0}^t\binom{n+t}{i}(q-1)^i.
\end{equation*}
This cardinality is independent of $x$, and we denote its common value by $I_q(n,t)$.

For the asymptotic analysis, we recall the following standard entropy estimate. 
For $0\leq x\leq1$, define the $q$-ary entropy function by $H_q(x)=x\log_q(q-1)-x\log_qx-(1-x)\log_q(1-x)$, with the convention that $0\log_q0=0$. 
For every fixed $0\leq\rho\leq1-1/q$, we have
\begin{equation*}
\sum_{i=0}^{\lfloor\rho n\rfloor}\binom{n}{i}(q-1)^i=q^{H_q(\rho)n+o(n)}.
\end{equation*}

We next recall some notation and basic facts concerning binary constant-weight codes, which play a central role in our arguments.
For $x\in\{0,1\}^n$, let $\supp(x)=\{i\in[n]:x_i=1\}$ and $w_{\mathrm H}(x)=|\supp(x)|$ denote the support and Hamming weight of $x$, respectively. 
For $x,y\in\{0,1\}^n$, their Hamming distance is defined by $d_{\mathrm H}(x,y)=|\{i\in[n]:x_i\neq y_i\}|$.
We denote by $A^{(\mathrm H)}(n,d,w)$ the maximum cardinality of a binary code of length $n$, constant weight $w$, and minimum Hamming distance at least $d$.

Let $\binom{[n]}{w}=\{S\subseteq[n]:|S|=w\}$, and for $S\subseteq[n]$, let ${1}_S\in\{0,1\}^n$ denote its indicator vector. 
The correspondence $S\mapsto{1}_S$ induces a bijection between families $\mathcal F\subseteq\binom{[n]}{w}$ and binary constant-weight-$w$ codes. 
For $E,F\in\binom{[n]}{w}$, we have
\begin{equation*}
d_{\mathrm H}({1}_E,{1}_F)=2w-2|E\cap F|.
\end{equation*}
Moreover, complementation preserves Hamming distance while changing the weight from $w$ to $n-w$. 
Consequently, $A^{(\mathrm H)}(n,d,w)=A^{(\mathrm H)}(n,d,n-w)$.

For $0\leq\delta,\omega\leq1$, define the asymptotic rate of binary constant-weight codes by
$R^{(\mathrm H)}(\delta,\omega)=\limsup\limits_{n\to\infty}\frac{1}{n}\log_2 A^{(\mathrm H)}\bigl(n,\lfloor\delta n\rfloor,\lfloor\omega n\rfloor\bigr)$.
The MRRW bound for binary constant-weight codes~\cite{mceliece1977new} states that
\begin{equation*}
R^{(\mathrm H)}(\delta,\omega)\leq R_{\mathrm{LP}}(\delta,\omega).
\end{equation*}
We use the following explicit expression for $R_{\mathrm{LP}}$ from~\cite[Eq.~(10)]{ben2006upper}:
\begin{equation*}
R_{\mathrm{LP}}(\delta,\omega)
=
\begin{cases}
H_2\left(\frac{1}{2}\left(1-\sqrt{1-\left(\sqrt{4\omega(1-\omega)-\delta(2-\delta)}-\delta\right)^2}\right)\right), & \delta< 2\omega(1-\omega), \\
0, & \delta\ge 2\omega(1-\omega).
\end{cases}
\end{equation*}

\section{Johnson-type list-size bounds}\label{sec:3}

\noindent In this section, we prove our main result, Theorem~\ref{thm:main-Johnson}, and derive Corollary~\ref{cor:LD-radius}.
We begin by showing that it suffices to consider received words of the canonical length $N_0=n-s+t$.

\begin{lemma}\label{lem:length-reduction}
Let $q\geq2$ be an integer, let $C\subseteq[q]^n$ be a code, and let $s,t$ be nonnegative integers with $s\leq n$. 
Set $N_0=n-s+t$. For every integer $N\in[n-s,n+t]$ and every $z\in[q]^N$, there exists $z_0\in[q]^{N_0}$ such that
\begin{equation*}
B_{\mathrm L}(z,s,t)\cap C \subseteq B_{\mathrm L}(z_0,s,t)\cap C.
\end{equation*}
\end{lemma}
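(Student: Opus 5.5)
The plan is to replace the definition of $B_{\mathrm L}(z,s,t)$ by an equivalent common-subsequence criterion, and then handle the cases $N\le N_0$ and $N>N_0$ separately by padding or truncating $z$.

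\textbf{Reformulation.} For $z\in[q]^N$ and $c\in[q]^n$, I would first show that $c\in B_{\mathrm L}(z,s,t)$ if and only if $z$ and $c$ have a common subsequence $w$ with
\begin{equation*}
|z|-|w|\le t \quad\text{and}\quad n-|w|\le s .
\end{equation*}
For the forward direction, suppose $c$ arises from $z$ by $a\le s$ insertions and $b\le t$ deletions. The positions of $z$ that are never deleted, and therefore survive into $c$, form a common subsequence $w$. Its length satisfies $|w|\ge N-b$, and since $c$ consists of $w$ together with at most $a$ inserted symbols, also $|w|\ge n-a$. For the converse, starting from $z$, delete the $|z|-|w|$ symbols outside $w$ and then insert the $n-|w|$ symbols of $c$ outside $w$. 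All subsequent steps use only this criterion.

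\textbf{Case $N\le N_0$.} Obtain $z_0$ by inserting $N_0-N$ arbitrary symbols into $z$, for instance at the end. Take any $c\in B_{\mathrm L}(z,s,t)\cap C$ with witness $w$. Then $w$ is still a subsequence of $z_0$, and $n-|w|\le s$ gives $|w|\ge n-s$. Hence
\begin{equation*}
|z_0|-|w|\le N_0-(n-s)=t \quad\text{and}\quad n-|w|\le s,
\end{equation*}
so $c\in B_{\mathrm L}(z_0,s,t)$.

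\textbf{Case $N>N_0$.} Obtain $z_0$ by deleting any $N-N_0$ positions of $z$. For $c$ with witness $w$, let $w'$ be the subsequence of $w$ formed by the matched positions of $z$ that survive. Then $w'$ is a common subsequence of $z_0$ and $c$. Since $|w|\ge N-t$, we get
\begin{equation*}
|w'|\ge |w|-(N-N_0)\ge N_0-t=n-s .
\end{equation*}
Both conditions $N_0-|w'|\le t$ and $n-|w'|\le s$ then follow, so $c\in B_{\mathrm L}(z_0,s,t)$.

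\textbf{Main obstacle.} The only step needing care is the reformulation, specifically the bookkeeping that yields both length conditions simultaneously from a single edit sequence. Once it is in place, both cases are one-line length estimates. The hypothesis $s\le n$ only ensures $N_0\ge t\ge 0$, so that $z_0$ makes sense.
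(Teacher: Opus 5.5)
Your proof is correct. Your choice of $z_0$ is the same as the paper's: pad $z$ when $N\le N_0$, and delete arbitrary positions when $N>N_0$. The two arguments differ only in how they check that $c\in B_{\mathrm L}(z_0,s,t)$.

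The paper stays with edit operations throughout. It fixes a transformation from $z$ to $c$ with $a\le s$ insertions and $b\le t$ deletions, and prefixes it with undoing the modification of $z$. In the padding case this means deleting the $k=N_0-N$ added symbols; in the truncation case it means reinserting the $k=N-N_0$ removed ones. The length identity $N+a-b=n$ then shows the new counts are $b+k=a-s+t\le t$ deletions, respectively $a+k=b+s-t\le s$ insertions.

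You instead first establish a normal form: $c\in B_{\mathrm L}(z,s,t)$ if and only if there is a common subsequence $w$ with $|z|-|w|\le t$ and $n-|w|\le s$. Padding then preserves $w$. Truncation is handled by restricting $w$ to the surviving positions, rather than by reinserting symbols.

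The underlying mechanism is the same in both. In the padding case the insertion budget ($|w|\ge n-s$, resp.\ $a\le s$) controls the new deletion count. In the truncation case the deletion budget ($|w|\ge N-t$, resp.\ $b\le t$) controls the new insertion count.

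The paper's route is shorter and needs no auxiliary equivalence. Yours costs an extra lemma, but it makes explicit that the order of operations does not matter, including inserting a symbol and later deleting it. Its forward direction is also exactly the fact the paper invokes, without proof, at the start of the proof of Lemma~\ref{lem:canonical-length}. So your reformulation would serve both lemmas at once.
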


\begin{proof}
Fix a positive integer $N\in[n-s,n+t]$ and a word $z\in[q]^N$. 
If $N<N_0$, obtain $z_0$ from $z$ by inserting any $N_0-N$ symbols.
If $N>N_0$, obtain $z_0$ from $z$ by deleting any $N-N_0$ symbols.
If $N=N_0$, set $z_0=z$.

Let $c\in B_{\mathrm L}(z,s,t)\cap C$, and fix a transformation from $z$ to $c$ using $a\leq s$ insertions and $b\leq t$ deletions. 
Since $z$ and $c$ have lengths $N$ and $n$, respectively, we have $N+a-b=n$.

Suppose first that $N<N_0$, and set $k=N_0-N$. 
Starting from $z_0$, delete the $k$ symbols added in its construction and then apply the chosen transformation from $z$ to $c$. 
This uses $a\leq s$ insertions and $b+k=b+N_0-N=a-s+t\leq t$ deletions.

Suppose next that $N>N_0$, and set $k=N-N_0$. 
Starting from $z_0$, reinsert the $k$ symbols deleted in its construction and then apply the chosen transformation from $z$ to $c$. 
This uses $b\leq t$ deletions and $a+k=a+N-N_0=b+s-t\leq s$ insertions.

Finally, if $N=N_0$, then $z_0=z$, so $c\in B_{\mathrm L}(z_0,s,t)$ immediately. 
Thus, $c\in B_{\mathrm L}(z_0,s,t)$ in all cases. This completes the proof.
\end{proof}

We next bound the list size at the canonical length.

\begin{lemma}\label{lem:canonical-length}
Let $q\geq2$ be an integer, and let $n,d$ be positive integers with $d\leq2n$. 
Let $C\subseteq [q]^n$ be a code with minimum Levenshtein distance at least $d$. 
Let $s,t$ be nonnegative integers with $s\le \lfloor\frac{d-1}{2}\rfloor$. 
Then, for every $z\in [q]^{n-s+t}$, we have
\begin{equation*}
    |B_{\mathrm{L}}(z,s,t)\cap C|\leq A^{(\mathrm{H})}(n-s+t,~ d-2s, ~t).
\end{equation*}
\end{lemma}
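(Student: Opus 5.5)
Fix $z\in[q]^{N_0}$ with $N_0=n-s+t$, and let $\mathcal L=B_{\mathrm L}(z,s,t)\cap C$. The plan is to attach to each $c\in\mathcal L$ a set of positions in $z$ of fixed size, and to show that distinct codewords receive sets that are far apart.

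First, for each $c\in\mathcal L$ we fix a transformation from $z$ to $c$ using $a\le s$ insertions and $b\le t$ deletions. Since $N_0+a-b=n$, we get $b=a+t-s$. The symbols of $z$ that are not deleted form a common subsequence of $z$ and $c$ of length $N_0-b=n-a\ge n-s$. Keeping only $n-s$ of these positions, we obtain a common subsequence of length exactly $n-s$. Let $Y_c\subseteq[N_0]$ be its set of positions in $z$. Then $|Y_c|=n-s$, and $z$ restricted to $Y_c$ is a subsequence of $c$. Define the complement $T_c=[N_0]\setminus Y_c$, which has $|T_c|=t$.

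Next, we show that the map $c\mapsto Y_c$ is injective on $\mathcal L$ and that the sets are far apart. Take distinct $c,c'\in\mathcal L$. The positions in $Y_c\cap Y_{c'}$ spell out, in increasing order, a subsequence $u$ of $z$. This $u$ is a subsequence of $z|_{Y_c}$, hence of $c$, and likewise a subsequence of $c'$. Therefore
\[
|Y_c\cap Y_{c'}|\le \operatorname{LCS}(c,c')=n-\tfrac12 d_{\mathrm L}(c,c')\le n-\tfrac d2 .
\]
Using the identity $d_{\mathrm H}({1}_E,{1}_F)=2w-2|E\cap F|$ with $w=n-s$, this gives
\[
d_{\mathrm H}({1}_{Y_c},{1}_{Y_{c'}})\ge 2(n-s)-2n+d=d-2s\ge 1 .
\]
The last inequality uses the hypothesis $s\le\lfloor (d-1)/2\rfloor$. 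This is exactly where that hypothesis is needed: it guarantees $d-2s\ge1$, so distinct codewords give distinct vectors. Without it, two codewords could collapse onto the same set and the map would not be injective.

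Finally, the family $\{{1}_{Y_c}:c\in\mathcal L\}$ is a binary code of length $N_0$ with constant weight $n-s$, minimum distance at least $d-2s$, and $|\mathcal L|$ distinct members. Complementation preserves Hamming distance, so the vectors ${1}_{T_c}$ form a constant-weight-$t$ code with the same parameters. Equivalently, $A^{(\mathrm H)}(N_0,d-2s,n-s)=A^{(\mathrm H)}(N_0,d-2s,t)$. Hence $|\mathcal L|\le A^{(\mathrm H)}(n-s+t,d-2s,t)$.

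The main point requiring care is the intersection step. It relies on the fact that a common index subset of two subsequences of $z$ yields a common subsequence of $c$ and $c'$, and this works because order is preserved along positions of $z$. Fixing the length at exactly $n-s$ is essential, since it is what makes the resulting code constant-weight.
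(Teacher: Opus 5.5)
Your proposal is correct and follows essentially the same route as the paper: for each codeword you choose a size-$(n-s)$ position set in $z$ whose restriction is a subsequence of $c$, bound pairwise intersections by $\operatorname{LCS}(c,c')$ to get Hamming distance at least $d-2s\ge 1$ (which also gives injectivity), and finish with the weight symmetry $A^{(\mathrm H)}(N_0,d-2s,n-s)=A^{(\mathrm H)}(N_0,d-2s,t)$. One minor imprecision: the undeleted symbols of $z$ number at least $N_0-b$ rather than exactly $N_0-b$, since some deletions may remove inserted symbols, but this does not affect the argument.
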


\begin{proof}
For each $c\in B_{\mathrm{L}}(z,s,t)\cap C$, since $c$ can be obtained from $z$ by at most $s$ insertions and at most $t$ deletions, $z$ and $c$ have a common subsequence of length at least $n-s$.
Hence, we may choose a subset $S_c\subseteq[n-s+t]$ with 
$|S_c|=n-s$ such that the subsequence of $z$ indexed by $S_c$ is a subsequence of $c$.

Observe that for distinct $c,c'\in B_{\mathrm{L}}(z,s,t)\cap C$, the subsequence of $z$ indexed by 
$S_c\cap S_{c'}$ is a common subsequence of $c$ and $c'$. Thus, $|S_c\cap S_{c'}|\leq\operatorname{LCS}(c,c')$, and hence
\begin{equation*}
\begin{split}
    d_{\mathrm{H}}(1_{S_c},1_{S_{c'}})
    &=2(n-s)-2\left|S_c\cap S_{c'}\right|\\ 
    &\geq 2(n-s)-2\operatorname{LCS}(c,c') 
    = d_{\mathrm{L}}(c,c')-2s 
    \geq d-2s.
\end{split}
\end{equation*}
Therefore, $\{1_{S_c}:c\in B_{\mathrm{L}}(z,s,t)\cap C\}\subseteq\{0,1\}^{n-s+t}$ is a binary code of block length $n-s+t$, constant weight $n-s$, and minimum Hamming distance at least $d-2s$. 

Since $d>2s$, the encoding map $c\mapsto 1_{S_c}$ is injective. Hence \begin{equation*} 
|B_{\mathrm{L}}(z,s,t)\cap C| = |\{1_{S_c}:c\in B_{\mathrm{L}}(z,s,t)\cap C\}| \leq A^{(\mathrm{H})}(n-s+t,~d-2s,~n-s). 
\end{equation*} 
Observe that $A^{(\mathrm{H})}(n-s+t,~d-2s,~n-s)=A^{(\mathrm{H})}(n-s+t,~d-2s,~t)$. This completes the proof.
\end{proof}

\begin{proof}[Proof of Theorem~\ref{thm:main-Johnson}]
Fix a positive integer $N\in[n-s,n+t]$ and a word $z\in[q]^N$.
By Lemma~\ref{lem:length-reduction}, there exists
$z_0\in[q]^{n-s+t}$ such that
$B_{\mathrm L}(z,s,t)\cap C\subseteq
B_{\mathrm L}(z_0,s,t)\cap C$.
The desired bound now follows from
Lemma~\ref{lem:canonical-length}.
\end{proof}

\begin{proof}[Proof of Corollary~\ref{cor:LD-radius}]
Set $s=\lfloor\tau_{\mathrm D}n\rfloor$, $t=\lfloor\tau_{\mathrm I}n\rfloor$, and $N_0=n-s+t$.
Since insertion and deletion counts are integers, allowing at most $\tau_{\mathrm D}n$ deletions and $\tau_{\mathrm I}n$ insertions is equivalent to allowing at most $s$ deletions and $t$ insertions, respectively. 
Likewise, a positive integer $N$ belongs to $[n-\tau_{\mathrm D}n,n+\tau_{\mathrm I}n]$ if and only if it belongs to $[n-s,n+t]$.

Since $\tau_{\mathrm D}<\delta$, we have $s\leq\tau_{\mathrm D}n<\delta n=d/2$, and hence $s\leq\lfloor(d-1)/2\rfloor$. 
By Theorem~\ref{thm:main-Johnson} and the weight symmetry of constant-weight codes, it suffices to prove $A^{(\mathrm H)}(n-s+t,d-2s,n-s)\leq L$.

Suppose, to the contrary, that $A^{(\mathrm H)}(n-s+t,d-2s,n-s)\geq L+1$. 
Then there exist $S_0,\ldots,S_L\in\binom{[N_0]}{n-s}$ whose indicator vectors have pairwise Hamming distance at least $d-2s$. 
For distinct $i,j$, we have $d_{\mathrm H}({1}_{S_i},{1}_{S_j})=2(n-s)-2|S_i\cap S_j|\geq d-2s$, and therefore $|S_i\cap S_j|\leq n-d/2$.

For each $k\in[N_0]$, let $a_k=|\{i\in\{0,\ldots,L\}:k\in S_i\}|$, and let $1_{\{a_k>0\}}$ denote the indicator of the condition $a_k>0$; 
that is, it equals $1$ if $a_k>0$ and $0$ otherwise. Thus, $\left|\bigcup_{i=0}^{L}S_i\right|=\sum_{k=1}^{N_0}1_{\{a_k>0\}}$.
By double counting, we have $\sum_{k=1}^{N_0}a_k =\sum_{i=0}^{L}|S_i|=(L+1)(n-s)$ and $\sum_{k=1}^{N_0}\binom{a_k}{2}=\sum_{0\leq i<j\leq L}|S_i\cap S_j|\leq\binom{L+1}{2}\left(n-d/2\right)$.

Fix any $r\in[L]$. For every $k\in[N_0]$, we have
\begin{equation*}
1_{\{a_k>0\}}\geq\frac{2a_k}{r+1}-\frac{2\binom{a_k}{2}}{r(r+1)}.
\end{equation*}
Equality holds when $a_k=0$. 
When $a_k>0$, the difference between the two sides is $(a_k-r)(a_k-r-1)/(r(r+1))$. 
Since $a_k$ is an integer, it cannot lie strictly between the consecutive integers $r$ and $r+1$, 
so this difference is nonnegative. 
Summing over $k\in[N_0]$, we obtain
\begin{equation*}
\begin{split}
N_0\geq\left|\bigcup_{i=0}^{L}S_i\right|=\sum_{k=1}^{N_0}1_{\{a_k>0\}}
&\geq\frac{2}{r+1}\sum_{k=1}^{N_0}a_k-\frac{2}{r(r+1)}\sum_{k=1}^{N_0}\binom{a_k}{2}\\
&\geq\frac{2(L+1)}{r+1}(n-s)-\frac{L(L+1)}{r(r+1)}\left(n-\frac{d}{2}\right).
\end{split}
\end{equation*}
Using $N_0=n-s+t$, rearranging, and dividing by $n$, we obtain
\begin{equation*}
\frac{t}{n}\geq\frac{L+1}{r+1}\left[\frac{2L-r+1}{L+1}\left(1-\frac{s}{n}\right)-\frac{L}{r}(1-\delta)\right].
\end{equation*}
Since $t/n\leq\tau_{\mathrm I}$ and $s/n\leq\tau_{\mathrm D}$, it follows that
\begin{equation*}
\begin{split}
\tau_{\mathrm I}\geq\frac{t}{n}
&\geq\frac{L+1}{r+1}\left[\frac{2L-r+1}{L+1}(1-\tau_{\mathrm D})-\frac{L}{r}(1-\delta)\right].
\end{split}
\end{equation*}
Here the last inequality follows from $s/n\leq\tau_{\mathrm D}$ and $2L-r+1>0$. 
Since $r\in[L]$ was arbitrary, we conclude that
\begin{equation*}
\tau_{\mathrm I}\geq\max_{r\in[L]}\left\{\frac{L+1}{r+1}\left[\frac{2L-r+1}{L+1}(1-\tau_{\mathrm D})-\frac{L}{r}(1-\delta)\right]\right\},
\end{equation*}
contradicting \eqref{eq:LD-radius}. This completes the proof.
\end{proof}

\section{Code-cardinality and asymptotic rate bounds}\label{sec:4}

\noindent We first prove the finite-length cardinality bound by averaging the list-size estimate in Theorem~\ref{thm:main-Johnson}.

\begin{proof}[Proof of Theorem~\ref{thm:main-Elias}]
Let $C\subseteq[q]^n$ be a code with minimum Levenshtein distance at least $d$. 
For each $x\in C$, fix a subsequence $x'$ of $x$ of length $n-s$. 
Observe that the shortening map $\pi:C\to [q]^{n-s},~x\mapsto x'$ is injective; 
otherwise, two distinct codewords would share the same selected subsequence of length $n-s$, implying that their Levenshtein distance is at most $2s<d$. 

Let $y\in[q]^{n-s+t}$ be chosen uniformly at random. 
On the one hand, by linearity of expectation, 
\begin{equation}\label{eq:lower-bound}
\mathbb{E}\left|D_t(y)\cap \pi(C)\right|
=\sum_{x\in C}\Pr\left[y\in I_t(\pi(x))\right]
=|C|\cdot\frac{I_q(n-s,~t)}{q^{n-s+t}}.
\end{equation}
On the other hand, observe that for each $y\in[q]^{n-s+t}$, we have $\pi^{-1}(D_t(y)\cap \pi(C))\subseteq B_{\mathrm L}(y,s,t)\cap C$. Thus, by Theorem~\ref{thm:main-Johnson}, we have
\begin{equation}\label{eq:upper-bound}
\mathbb{E}\left|D_t(y)\cap \pi(C)\right|=\mathbb{E}\mathinner{|\pi^{-1}\left(D_t(y)\cap \pi(C)\right)|}
\leq\mathbb{E}\left|B_{\mathrm{L}}(y,s,t)\cap C\right|
\leq A^{(\mathrm{H})}(n-s+t,~d-2s,~t). 
\end{equation}
Combining \eqref{eq:lower-bound} and \eqref{eq:upper-bound} gives the desired bound. 
This completes the proof.
\end{proof}

We next derive the asymptotic consequence of Theorem~\ref{thm:main-Elias}.

\begin{lemma}\label{lem:asymptotic}
For every integer $q\geq2$ and every $0<\delta<1$, we have
\begin{equation}\label{eq:asymptotic}
R_q^{(\mathrm L)}(\delta)\leq\inf_{\substack{0\leq\sigma<\delta\\0\leq\omega\leq1-1/q}}\frac{1-\sigma}{1-\omega}\left(1-H_q(\omega)+\frac{1}{\log_2 q}R^{(\mathrm H)}\left(\frac{2(\delta-\sigma)(1-\omega)}{1-\sigma},\omega\right)\right).
\end{equation}
\end{lemma}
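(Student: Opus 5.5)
The plan is to apply Theorem~\ref{thm:main-Elias} with $d=\lfloor 2\delta n\rfloor$ and with $s,t$ scaled linearly in $n$, take logarithms, divide by $n$, and pass to the limit.

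\textbf{Choice of parameters.} Fix $0\le\sigma<\delta$ and $0\le\omega\le 1-1/q$. Set $s=\lfloor\sigma n\rfloor$, which satisfies $s\le\lfloor(d-1)/2\rfloor$ for large $n$ because $\sigma<\delta$. Choose $t$ so that the insertion fraction of the received length matches $\omega$:
\begin{equation*}
\frac{t}{n-s+t}\approx\omega, \qquad\text{i.e.}\qquad t=\left\lfloor\frac{\omega}{1-\omega}(n-s)\right\rfloor.
\end{equation*}
The edge case $\omega=1$ does not occur since $\omega\le 1-1/q<1$. Then $M:=n-s+t\approx n(1-\sigma)/(1-\omega)$.

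\textbf{Evaluating the three factors.} With this $M$, each factor in \eqref{eq:main-Elias} becomes:
\begin{itemize}
\item $q^{M}=q^{n(1-\sigma)/(1-\omega)+o(n)}$.
\item $I_q(n-s,t)=\sum_{i\le t}\binom{M}{i}(q-1)^i=q^{H_q(\omega)M+o(n)}$, by the entropy estimate from Section~\ref{sec:2}, which applies since $t/M\to\omega\le 1-1/q$.
\item $A^{(\mathrm H)}(M,d-2s,t)$: its relative distance is $(d-2s)/M\to 2(\delta-\sigma)(1-\omega)/(1-\sigma)$ and its relative weight is $t/M\to\omega$. Hence $\log_2 A^{(\mathrm H)}(M,d-2s,t)\le M\bigl(R^{(\mathrm H)}(\cdot,\omega)+o(1)\bigr)$ along the sequence.
\end{itemize}
Dividing $\log_q$ of the right-hand side by $n$ and taking $\limsup$ then gives
\begin{equation*}
\frac{1-\sigma}{1-\omega}\left(1-H_q(\omega)+\frac{1}{\log_2 q}R^{(\mathrm H)}\!\left(\frac{2(\delta-\sigma)(1-\omega)}{1-\sigma},\omega\right)\right).
\end{equation*}
Taking the infimum over $(\sigma,\omega)$ yields \eqref{eq:asymptotic}.

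\textbf{Main obstacle.} The one real technicality is the last bullet. $R^{(\mathrm H)}$ is defined by a $\limsup$ along the specific sequence $(\lfloor\delta' M\rfloor,\lfloor\omega M\rfloor)$, whereas our parameters $(d-2s,\,t)$ match $(\delta',\omega)$ only up to $O(1)$ rounding errors, and $M$ runs over a subsequence of the integers rather than all of them. I would handle this as follows.

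First, monotonicity in the distance: $A^{(\mathrm H)}(M,d',w)$ is nonincreasing in $d'$, so if $d-2s<\lfloor\delta' M\rfloor$ I would replace $\delta'$ by $\delta'-\epsilon$.

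Second, the weight mismatch $|t-\lfloor\omega M\rfloor|=O(1)$. I would absorb it by puncturing or shortening a bounded number of coordinates. For instance, $A^{(\mathrm H)}(M,d',w)\le A^{(\mathrm H)}(M+1,d',w+1)$, obtained by appending a coordinate equal to $1$, changes the rate by only $o(1)$.

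Third, I would either use continuity of the bound in $\delta'$, or simply work directly with $R_{\mathrm{LP}}$. That function is continuous in $(\delta,\omega)$, and the MRRW bound holds uniformly under such $o(1)$ perturbations. This is the form actually needed for Theorem~\ref{thm:main-asymptotic}.

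Since the lemma is stated with $R^{(\mathrm H)}$, I would state these perturbation steps explicitly, taking $\epsilon\to0$ at the end. Where exact continuity of $R^{(\mathrm H)}$ is unavailable, I would interpret the bound via the $\limsup$ with slack $\epsilon$.
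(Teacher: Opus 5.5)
Your parameter choice is essentially the paper's, but your treatment of the rounding leaves a genuine gap for the lemma as stated.

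Replacing $\Delta=\frac{2(\delta-\sigma)(1-\omega)}{1-\sigma}$ by $\Delta-\epsilon$ only gives the bound with $R^{(\mathrm H)}(\Delta-\epsilon,\omega)$. Letting $\epsilon\to0$ then needs left-continuity of $R^{(\mathrm H)}(\cdot,\omega)$. That function is defined by a $\limsup$, and its continuity is neither given in the paper nor proved by you; it would need an additional argument, such as shortening with control of the weight. Your fallback of ``interpreting the bound with slack $\epsilon$'' proves a weaker statement. Switching to the continuous function $R_{\mathrm{LP}}$ proves Theorem~\ref{thm:main-asymptotic}, but not this lemma. The weight-fixing step (appending a coordinate equal to $1$) is only sketched, and it would itself disturb the distance normalization.

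The missing observation is that no perturbation is needed, because your parameters already match exactly. From $t\le\frac{\omega}{1-\omega}(n-s)<t+1$ you get $t\le\omega M<t+1$, so $t=\lfloor\omega M\rfloor$. For the distance, use $M\le\frac{n-s}{1-\omega}$ and write $u=\sigma n-\lfloor\sigma n\rfloor$ and $v=2\delta n-\lfloor2\delta n\rfloor$. A direct computation gives
\[
(d-2s)-\Delta M\;\ge\;\frac{2(1-\delta)}{1-\sigma}\,u-v\;>\;-1,
\]
so by integrality $d-2s\ge\lfloor\Delta M\rfloor$. The paper instead takes $s_n=\lfloor\sigma n\rfloor-K$ for a suitable constant $K$, which makes this difference strictly positive.

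Monotonicity of $A^{(\mathrm H)}$ in the distance then gives $A^{(\mathrm H)}(M,d-2s,t)\le A^{(\mathrm H)}(M,\lfloor\Delta M\rfloor,\lfloor\omega M\rfloor)$. Since $M=M_n\to\infty$, the $\limsup$ along $M_n$ is at most $R^{(\mathrm H)}(\Delta,\omega)$, so restricting to a subsequence of lengths is not an obstacle. With these two exact relations, the rest of your argument, including the entropy estimate for $I_q(n-s,t)$, goes through unchanged.
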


\begin{proof}
Fix $0\leq\sigma<\delta$ and $0\leq\omega\leq1-1/q$, and set $\Delta=\frac{2(\delta-\sigma)(1-\omega)}{1-\sigma}$.
Set $K=0$ if $\sigma=0$; otherwise, choose an integer $K\geq1$ such that $2K(1-\delta)\geq1-\sigma$. 
For all sufficiently large $n$, let $d_n=\lfloor2\delta n\rfloor$, $s_n=\lfloor\sigma n\rfloor-K$, $t_n=\lfloor\frac{\omega}{1-\omega}(n-s_n)\rfloor$, and $m_n=n-s_n+t_n$.

The definition of $t_n$ gives $t_n\leq\omega m_n<t_n+1$, so $t_n=\lfloor\omega m_n\rfloor$. 
It also follows that $m_n/n\to(1-\sigma)/(1-\omega)$.

We next account for the rounding of the distance parameter. 
If $\sigma=0$, then $m_n\leq n/(1-\omega)$, and hence $\Delta m_n\leq2\delta n$, which implies $d_n\geq\lfloor\Delta m_n\rfloor$. 
Suppose that $\sigma>0$, and write $u_n=\sigma n-\lfloor\sigma n\rfloor$ and $v_n=2\delta n-\lfloor2\delta n\rfloor$. 
Since $m_n\leq(n-s_n)/(1-\omega)$, a direct calculation gives
\begin{equation*}
(d_n-2s_n)-\Delta m_n \geq \frac{2(1-\delta)}{1-\sigma}(u_n+K)-v_n > 0,
\end{equation*}
where the final inequality follows from $u_n\geq0$, $v_n<1$, and $2K(1-\delta)\geq1-\sigma$. 
Thus, in either case, $d_n-2s_n\geq\lfloor\Delta m_n\rfloor$.

For all sufficiently large $n$, we also have $s_n\leq\lfloor(d_n-1)/2\rfloor$. 
Therefore, Theorem~\ref{thm:main-Elias} and the monotonicity of $A^{(\mathrm H)}$ in its distance parameter yield
\begin{equation*}
\begin{split}
A_q^{(\mathrm L)}(n,d_n)
&\leq A^{(\mathrm H)}(m_n,d_n-2s_n,t_n)\cdot\frac{q^{m_n}}{I_q(n-s_n,t_n)}\\
&\leq A^{(\mathrm H)}(m_n,\lfloor\Delta m_n\rfloor,\lfloor\omega m_n\rfloor)\cdot\frac{q^{m_n}}{I_q(n-s_n,t_n)}.
\end{split}
\end{equation*}

Since $t_n=\lfloor\omega m_n\rfloor$, the entropy estimate recalled in Section~\ref{sec:2} gives $\frac{1}{m_n}\log_q I_q(n-s_n,t_n)\to H_q(\omega)$.
Moreover, since $m_n\to\infty$, the definition of $R^{(\mathrm H)}(\Delta,\omega)$ shows that the normalized logarithm of the constant-weight term has limit superior at most $R^{(\mathrm H)}(\Delta,\omega)$. 
Taking base-$q$ logarithms in the preceding inequality, dividing by $n$, and using $m_n/n\to(1-\sigma)/(1-\omega)$ proves the bound in \eqref{eq:asymptotic} for the fixed pair $(\sigma,\omega)$. 
Taking the infimum over all admissible pairs completes the proof.
\end{proof}

\begin{proof}[Proof of Theorem~\ref{thm:main-asymptotic}]
This follows directly from Lemma~\ref{lem:asymptotic} and the MRRW bound recalled in Section~\ref{sec:2}.
\end{proof}

\section{Concluding Remarks}\label{sec:5}
\noindent We conclude the paper with an analysis of the tightness of the list-size bound in Theorem~\ref{thm:main-Johnson}. 
The following theorem shows that this bound is tight for sufficiently large $q$.
\begin{theorem}\label{thm:main-tightness}
Let $q,n,d$ be positive integers with $q\geq2$ and $d\leq2n$, and let $s,t$ be nonnegative integers with $s\leq\lfloor\frac{d-1}{2}\rfloor$.
If $q\geq n-s+t+A^{(\mathrm{H})}(n-s+t,~d-2s,~t)$, then there exist a code $C\subseteq [q]^n$ with minimum Levenshtein distance at least $d$ and a word $z\in [q]^{n-s+t}$ such that
\begin{equation*}
|B_{\mathrm L}(z,s,t)\cap C|=A^{(\mathrm H)}(n-s+t,~d-2s,~t).
\end{equation*}
Moreover, when $s=0$, the same conclusion holds under the weaker assumption $q\ge n+t$.
\end{theorem}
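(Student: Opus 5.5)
The plan is an explicit construction built on an extremal constant-weight code. By the weight symmetry $A^{(\mathrm H)}(N_0,d-2s,t)=A^{(\mathrm H)}(N_0,d-2s,n-s)$ from Section~\ref{sec:2}, where $N_0=n-s+t$, we may fix $M=A^{(\mathrm H)}(N_0,d-2s,t)$ and a family $S_1,\dots,S_M\in\binom{[N_0]}{n-s}$ whose indicator vectors have pairwise Hamming distance at least $d-2s$. Since $d_{\mathrm H}(1_{S_i},1_{S_j})=2(n-s)-2|S_i\cap S_j|$, this means $|S_i\cap S_j|\le n-d/2$ for $i\neq j$.

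\textbf{The construction.}
\begin{itemize}
\item Let the received word be $z=(1,2,\dots,N_0)\in[q]^{N_0}$. This requires $q\ge N_0$, and every symbol of $z$ is distinct and increasing.
\item For each $i\in[M]$, let $c_i$ be the subsequence $z|_{S_i}$ of length $n-s$, followed by $s$ copies of a fresh symbol $N_0+i$.
\item This uses at most $N_0+M\le q$ symbols. When $s=0$ no fresh symbols are needed, so $q\ge n+t$ suffices.
\item Set $C=\{c_1,\dots,c_M\}$.
\end{itemize}

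\textbf{Membership in the list.} Each $c_i$ lies in $B_{\mathrm L}(z,s,t)$. Indeed, from $z$ delete the $t$ positions outside $S_i$, then insert $s$ copies of $N_0+i$ at the end.

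\textbf{Minimum distance, the key step.} Take $i\neq j$ and any common subsequence of $c_i$ and $c_j$.
\begin{itemize}
\item The fresh symbols $N_0+i$ and $N_0+j$ are distinct and occur in only one of the two words, so the subsequence uses only symbols from $[N_0]$.
\item Each word contains each such symbol at most once, so the subsequence uses only symbols in $S_i\cap S_j$.
\item Conversely, $z|_{S_i\cap S_j}$ is a common subsequence, because both words list their $[N_0]$-symbols in increasing order.
\end{itemize}
Hence $\operatorname{LCS}(c_i,c_j)=|S_i\cap S_j|\le n-d/2$, and therefore $d_{\mathrm L}(c_i,c_j)=2n-2\operatorname{LCS}(c_i,c_j)\ge d$. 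This also shows the $c_i$ are pairwise distinct. For $s>0$ that is already clear from the distinct fresh symbols; for $s=0$ it follows because distinct sets $S_i$ give distinct increasing sequences, using $d-2s\ge1$.

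\textbf{Conclusion.} We get $|B_{\mathrm L}(z,s,t)\cap C|\ge M$. The reverse inequality is Theorem~\ref{thm:main-Johnson}, so equality holds.

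The main obstacle is keeping the alphabet budget at $N_0+M$. Giving each codeword $s$ distinct fresh symbols would require $N_0+Ms$ symbols. Using $s$ repeated copies of a single private symbol per codeword avoids this, and it still contributes nothing to any LCS because the symbol appears in no other codeword.
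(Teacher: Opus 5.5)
Your proof is correct and essentially identical to the paper's. Both take an extremal family of $(n-s)$-subsets of $[n-s+t]$ and a received word $z$ with pairwise distinct symbols, and form each codeword from $z|_F$ plus $s$ copies of a private symbol (the paper prepends these copies, you append them), so that $\operatorname{LCS}(c_E,c_F)=|E\cap F|$. The one minor difference is that you cite Theorem~\ref{thm:main-Johnson} for the reverse inequality, which is unnecessary: it follows directly from $C\subseteq B_{\mathrm L}(z,s,t)$ and $|C|=M$.
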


\begin{proof}[Proof of Theorem~\ref{thm:main-tightness}]
Let $\mathcal F\subseteq \binom{[n-s+t]}{n-s}$ with $|\mathcal F|=A^{(\mathrm H)}(n-s+t,~d-2s,~n-s)$ such that $d_{\mathrm H}(1_E,1_{F})\geq d-2s$ for all distinct $E,F\in\mathcal{F}$.
Since $q\geq n-s+t+|\mathcal F|$, we may choose symbols $a_i$, $i\in[n-s+t]$, and $b_F$, $F\in\mathcal F$, from the alphabet $[q]$, all of which are pairwise distinct.

Let $z=(a_1,a_2,\dots,a_{n-s+t})$. 
For each $F\in\mathcal F$, define $c_F$ to be the concatenation of $s$ copies of $b_F$ and the subsequence of $z$ indexed by $F$. 
Then $c_F$ has length $s+(n-s)=n$, and it can be obtained from $z$ by deleting the $t$ positions in $[n-s+t]\setminus F$ and inserting $s$ copies of $b_F$. Hence $c_F\in B_{\mathrm L}(z,s,t)$.

Let $C=\{c_F:F\in\mathcal F\}$. We claim that $C$ has minimum Levenshtein distance at least $d$. 
Indeed, for distinct $E,F\in\mathcal{F}$, by the choice of $c_E$ and $c_F$, we have $\operatorname{LCS}(c_E,c_F)=|E\cap F|$. Thus
\begin{equation*}
\begin{split}
d_{\mathrm L}(c_E,c_{F})
&=2n-2\operatorname{LCS}(c_E,c_F)\\
&=2n-2\left|E\cap F\right|=d_{\mathrm H}(1_E,1_F)+2s\geq d-2s+2s=d.
\end{split}
\end{equation*}
By construction, every $c_F$ lies in $B_{\mathrm L}(z,s,t)$, and the words $c_F$ are distinct. Hence
\begin{equation*}    
|B_{\mathrm L}(z,s,t)\cap C|=|C|=|\mathcal F|
=A^{(\mathrm H)}(n-s+t,~d-2s,~t).
\end{equation*}

When $s=0$, the same construction works without the symbols $b_F$, $F\in\mathcal F$: simply let $c_F$ be the subsequence of $z$ indexed by $F$. 
The pairwise distinctness of the symbols in $z$ gives $d_{\mathrm L}(c_E,c_F)=d_{\mathrm H}(1_E,1_F)\ge d$, so the same conclusion holds with $q\ge n+t$.
\end{proof}

The alphabet-size requirement in Theorem~\ref{thm:main-tightness} can be very large. When $s>0$, its threshold contains $A^{(\mathrm H)}(n-s+t,d-2s,t)$, which may be exponential in its block length. 
Thus, the theorem establishes local tightness in a very-large-alphabet regime, but does not imply tightness for fixed $q$ or even for alphabet sizes polynomial in the block length.

\paragraph{Local versus global tightness.}
Theorem~\ref{thm:main-tightness} concerns the list size around a fixed received word. Its tightness does not imply that the global code-cardinality bound obtained by averaging is also tight. 
For example, when $d$ is even, taking $s=d/2-1$ and $t=0$ in Theorem~\ref{thm:main-Elias} gives
\begin{equation*}
A_q^{(\mathrm L)}(n,d) \leq q^{n-d/2+1}.
\end{equation*}
In contrast, Kong, Tamo, and Wei~\cite{kong2025combinatorial} proved that, for fixed $n$ and even $d$ with $4\leq d\leq2n-2$,
\begin{equation*}
A_q^{(\mathrm L)}(n,d)=\frac{q^{n-d/2+1}}{\binom{n}{d/2-1}}(1+o_q(1))\qquad\text{as }q\to\infty.
\end{equation*}
Thus, in this large-alphabet regime, our averaged bound has the correct power of $q$ but misses the factor $\binom{n}{d/2-1}^{-1}$. 
This shows that even when the local list-size bound is attained, the averaging step may introduce a nontrivial loss.

\section*{Acknowledgements}

\noindent The author thanks Yubo Sun for introducing him to insertion-deletion codes, and Chong Shangguan for valuable suggestions.

\bibliographystyle{plain}
\normalem
\bibliography{ref}

\end{document}